\newtheorem{theorem}{Theorem}[section]
\newtheorem{definition}{Definition}[section]
\newenvironment{proof}[1][Proof]{\noindent\textbf{#1.} }{\ }
\begin{document}
\title{Local optimality of a coherent feedback scheme for distributed entanglement generation: the idealized infinite bandwidth limit}
\author{Zhan Shi and Hendra I. Nurdin 
\thanks{
Z. Shi and H. Nurdin are with School of Electrical Engineering and 
Telecommunications,  The University of New South Wales,  
Sydney NSW 2052, Australia (e-mail: h.nurdin@unsw.edu.au, zhan.shi@student.unsw.edu.au).} 
}
\maketitle


\begin{abstract}
The purpose of this paper is to prove a local optimality property of a recently proposed coherent feedback configuration for distributed generation of EPR entanglement using two nondegenerate optical parametric amplifiers (NOPAs) in the idealized infinite bandwidth limit. This local optimality is with respect to a class of similar coherent feedback configurations but employing different unitary scattering matrices, representing different scattering of propagating signals within the network. The infinite bandwidth limit is considered as it significantly simplifies the analysis, allowing local optimality criteria to be explicitly verified. Nonetheless, this limit is relevant for the finite bandwidth scenario as it provides an accurate approximation to the EPR entanglement in the low frequency region where EPR entanglement exists.
\end{abstract}

\section{Introduction}
\label{sec:intro}
Entanglement is a quantum phenomenon in which states (represented by density operators) of a composite system composed of several quantum subsystems cannot be written as a convex combination of tensor products of the states of the subsystems. 
Such entangled states have, in recent decades, been of much interest as a resource for quantum information applications, such as for quantum communication \cite{Bowen2004,Weedbrook2012}. 
In particular, Einstein-Podolski-Rosen (EPR)-like entanglement,  generated in the continuous variables such as the amplitude and phase quadratures of a Gaussian optical field, has evoked considerable interest over discrete-variable entanglement, such as entanglement in finite-level systems like qubits, because EPR entangled pairs can be prepared easily and rapidly in quantum optics. 
In this paper, we are interested in EPR entanglement between two propagating continuous-mode Gaussian fields. Such a kind of entanglement is more accessible compared to EPR entanglement between a pair of single-mode fields produced in, say, inside an optical cavity \cite{Bowen2004, Braunstein2005}. 

EPR entanglement between continuous-mode Gaussian fields can be realized by two-mode squeezed states produced as the output of a nondegenerate optical parametric amplifier (NOPA). By pumping a strong coherent beam (which can be regarded as an undepleted classical light) to a crystal inside the cavity of the NOPA, two vacuum modes of the cavity interact with the pump beam, and photons escaping the cavity through its partially transmissive mirrors generate two output beams that are squeezed in amplitude and phase quadratures. If the two outgoing fields are squeezed below the quantum shot-noise limit, they are considered as EPR entangled beams \cite{Ou1992, Vitali2006}.  The input/output block representation of a NOPA ($G_i$) is shown as Fig.~\ref{fig:single-NOPA}. The NOPA has four ingoing fields and four outgoing fields. Among the inputs, $\xi_{loss,a,i}$ and $\xi_{loss,b,i}$ are amplification losses, caused by unwanted vacuum modes coupled into the cavity. As the two outputs corresponding to the loss fields  $\xi_{loss,a,i}$ and $\xi_{loss,b,i}$ are not of interest in this work, they are not shown in the figure. Note Fig.~\ref{fig:single-NOPA} only presents the ingoing and outgoing noises of interest, and does not show the pump beam.
\begin{figure}[htbp]
\begin{center}
\includegraphics[scale=0.4]{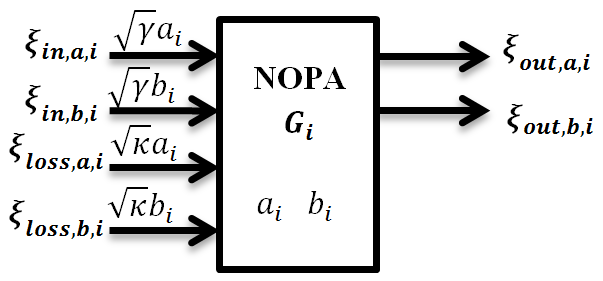}
\caption{Input/output block representation of a NOPA.}\label{fig:single-NOPA}
\end{center}
\end{figure}

In a previous work \cite{SN2015qip}, we have proposed a novel dual-NOPA coherent feedback system to produce EPR entangled propagating Gaussian fields, as shown in  Fig.~\ref{fig:dual-NOPA-cfb}. It was shown that this scheme can produce better EPR entanglement between the propagating Gaussian fields $\xi_{out,a,2}$ and $\xi_{out,b,1}$  (in the sense of producing more two-mode squeezing between quadratures of the fields) for the same amount of total pump power used in the two NOPAs, and displays more tolerance to transmission losses in the system, as compared to a conventional single NOPA and a cascaded two-NOPA system. 

\begin{figure}[htbp]
\begin{center}
\includegraphics[scale=0.35]{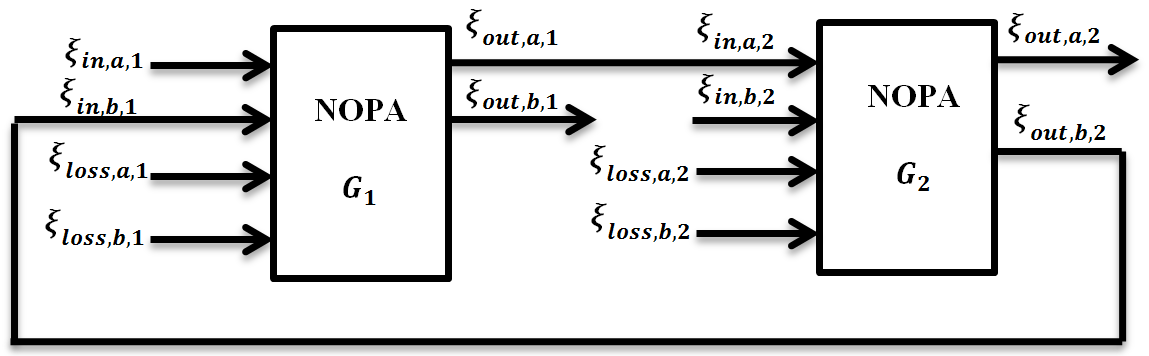}
\caption{The dual-NOPA coherent feedback network.}\label{fig:dual-NOPA-cfb}
\end{center}
\end{figure}

In a subsequent work \cite{SN2015acc}, we presented a linear quantum system consisting of two NOPAs connected to a static passive linear network, realizable by a network of beam splitters, mirrors and phase shifters, that are connected in a more general coherent feedback configuration, see Fig.~\ref{fig:system_paper2}.  
Here, the system is ideally lossless, that is, there are no transmission and amplification losses influencing the system. Hence, each NOPA is simplified to have only two ingoing fields, without amplification losses, as shown in Fig.~\ref{fig:system_paper2}.
The transformation implemented by the passive network in this configuration is represented by a $6 \times 6$ complex unitary matrix $\tilde S$. 
By employing a modified steepest descent algorithm, with the matrix corresponding to the dual-NOPA coherent feedback network shown in Fig.~\ref{fig:dual-NOPA-cfb} as a starting point, we optimized the EPR entanglement at frequency $\omega=0$, with respect of the transformation matrix $\tilde S$ of the passive network.

\begin{figure}[htbp]
\begin{center}
\includegraphics[scale=0.35]{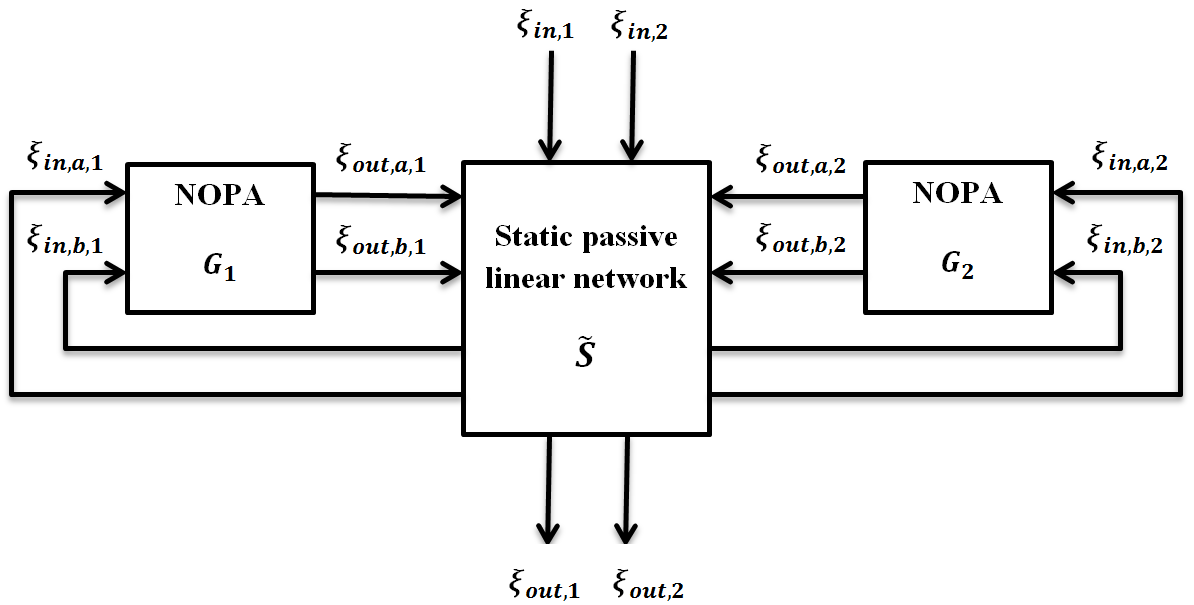}
\caption{A coherent-feedback system consisting of two NOPAs and a static passive network with six imputs and six outputs from \cite{SN2015acc}.}\label{fig:system_paper2}
\end{center}
\end{figure}

In this paper, we employ the steepest descent method to optimize a coherent feedback system shown in Fig.~\ref{fig:system}. The system contains two NOPAs and a static passive linear network described by a $2 \times 2$ complex unitary matrix $\tilde S$.  This system is a more restricted class of configuration than the one as shown in Fig.~\ref{fig:system_paper2}; it can be seen that the configuration in Fig.~\ref{fig:system} is a special case of the configuration in Fig.~\ref{fig:system_paper2}. 
Moreover, different from our previous work in \cite{SN2015acc}, in which the system shown in Fig.~\ref{fig:system_paper2} is considered  lossless, here we take the effect of transmission losses along channels and amplification losses of NOPAs into account. However, we neglect time delays in transmission. The effect of delays on EPR entanglement generated from related systems can be found  in our previous works \cite{SN2015qip, SN2015qic}.
In addition, unlike the work in \cite{SN2015acc}, the system is considered ideally static, that is, we consider the limit where the NOPAs are approximated as static devices with an infinite bandwidth \cite{Gough2010}.  The merits of studying this infinite bandwidth limit are twofold: (i) it allows a simplified analysis of the system, and (ii) calculations in the infinite bandwidth setting gives a very good approximation to the EPR entanglement in the low frequency region, discussed further in Section~\ref{sec:NOPA}. In this infinite bandwidth setting, we show explicitly that the choice of the scattering matrix in the scheme of \cite{SN2015qip} is in a certain sense locally optimal with respect to all possible choices of scattering matrices $\tilde S$ in the coherent feedback configuration of Fig.~\ref{fig:system}, under certain values of the effective amplitude of the pump laser driving the NOPA.  Note that there may exist another scattering matrix as a local minimizer that yields better EPR entanglement than the network shown in Fig.~\ref{fig:dual-NOPA-cfb}. Searching for such a scattering matrix can be a topic for future research.
\begin{figure*}[htbp]
\begin{center}
\includegraphics[scale=0.5]{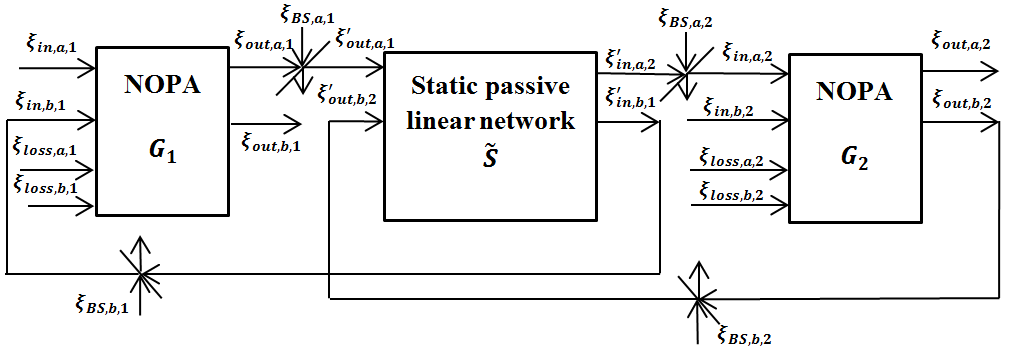}
\caption{A coherent-feedback system consisting of two NOPAs and a static passive network with two inputs and two outputs.}\label{fig:system}
\end{center}
\end{figure*}

The structure of the rest of this paper is as follows. We begin in Section \ref{sec:prelim} by giving a brief review of linear quantum systems,  EPR entanglement between two continuous-mode fields, and linear transformations implemented by a NOPA in the infinite bandwidth limit.  Section \ref{sec:system-model} describes the system of interest. In Section \ref{sec:optimization}, we discuss the optimization of the system. Finally, we draw a short conclusion in Section \ref{sec:conclusion}. 

\section{Preliminaries}
\label{sec:prelim}
The notations used in this paper are as follows: $\imath=\sqrt{-1}$ and $\operatorname{Re}$ denotes the real part of a complex quantity. The conjugate of a matrix is denoted by $\cdot^\#$, $\cdot^T$ denotes the transpose of a matrix of numbers or operators and $\cdot^*$ denotes (i) the complex conjugate of a number, (ii) the conjugate transpose of a matrix, as well as (iii) the adjoint of an operator. $O_{m\times n}$ is an $m$ by $n$ zero matrix (if $m=n$ then we simply write $O_m$), and $I_n$ is an $n$ by $n$ identity matrix.  Trace operator is denoted by  $\operatorname{Tr[\cdot]}$ and tensor product is $\otimes$.  $\delta(t)$ denotes the Dirac delta function.
\subsection{Linear quantum systems}
\label{sec:linear_sys}
Here we consider an open linear quantum system without a scattering process. The linear system contains $n$-bosonic modes $a_j(t)~(j=1,\ldots, n)$ satisfying the commutation relations $[a_i(t), a_j(t)^*]=\delta_{ij}$, $m$-incoming boson fields $\xi_{in,i}(t)~(i=1,\ldots, m)$ in the vacuum state, which obey the commutation relations $[\xi_{in,j}(t), \xi_{in,j}(s)^*]=\delta(t-s)$, as well as two outgoing fields $\xi_{out,k}(t)~(k=1,2)$ which are Gaussian continuous-mode fields. A continuous-mode field means that the field contains a continuum of modes in a continuous range of frequencies.  Note that, a system may have more than two outputs. 
However, as we are only interested in the entanglement generated by a certain pair of outgoing fields, in this work we will only be interested in a particular pair of output fields, labelled ${out,1}$ and ${out,2}$ in the following.
The time-varying interaction Hamiltonian between the system and its environment is $H_{\rm int}(t) = \imath (\xi(t)^*L - L^* \xi(t))$, where $\xi(t)=[\xi_{in,1}(t),\ldots \xi_{in,m}(t)]^T $, $L=[L_1,\ldots, L_l]^T$ and $L_j  (j=1, 2, \cdots, l)$ is the $j$-th system coupling operator. In the Heisenberg picture, time evolutions of a mode $a_j$ and an outgoing field operator $\xi_{out,i}$ are \cite{bGardiner2004,Nurdin2009}:
\begin{align}
a_j(t)=&U(t)^* a_j U(t),\nonumber \\
\xi_{out,i}(t)=&U(t)^*\xi_{in,i}(t)U(t),
\end{align}
where $U(t)={\rm exp}^{\hspace{-0.5cm}\longrightarrow}~(-i\int_0^t H_{\rm int}(s)ds)$ is a unitary process obeying the quantum white noise Schr\"{o}dinger equation $\dot{U}(t)=-\imath H_{\rm int}(t)U(t)$. 
However, this is not an ordinary Schr\"{o}dinger equation as the interaction Hamiltonian $H_{int}(t)$  is a time-varying observable involving the singular quantum white noise processes $\xi(t)$.  This quantum white noise equation has to be interpreted correctly within the framework of quantum stochastic calculus, for details see \cite{bGardiner2004, inbBelavkin2008, bWiseman2010, Gough2003}.
Employing quantum stochastic calculus,  dynamics of a linear quantum system is described by quantum Langevin equations and can be written in the following form
\begin{eqnarray}
    \dot{z}(t)&=&Az(t)+B\xi(t), \label{eq:dynamics} \\
     \xi_{out}(t)&=&Cz(t)+D\xi(t). \label{eq:output}
\end{eqnarray}
where
\begin{eqnarray}
    z&=&(a_1^q, a_1^p, \ldots, a_n^q, a_n^p)^T, \nonumber\\
    \xi &=&(\xi_{1}^q, \xi_{1}^p, \ldots, \xi_{m}^q, \xi_{m}^p)^T, \nonumber\\
    \xi_{out}&=&(\xi_{out,1}^q, \xi_{out,1}^p, \xi_{out,2}^q, \xi_{out,2}^p)^T, \label{eq:vector}
\end{eqnarray}
with {\it quadratures} \cite{inbBelavkin2008,bWiseman2010}
\begin{eqnarray}
a_j^q &=& a_j+a_j^*, \quad a_j^p = (a_j-a_j^*)/i, \nonumber \\
\xi_j^q &=& \xi_j+\xi_j^*, \quad \xi_j^p = (\xi_j-\xi_j^*)/i. \label{eq: quadratures}
\end{eqnarray}
The linear model described above is ubiquitous in fields such as quantum optics, optomechanics, and superconducting circuits, and are employed to describe the equations of motion for devices as diverse as optical cavities, optical parametric amplifiers, optical cavities with moving mirrors, cold atomic ensembles, and transmission line resonators, under appropriate assumptions on the system's parameters. 
  
\subsection{EPR entanglement between two continuous-mode fields}
\label{sec:entanglement}
We keep in mind here that in this paper, we investigate EPR entanglement between two continuous-mode Gaussian fields rather than entanglement between two single-mode Gaussian fields. In the latter case, the degree of entanglement can be assessed via the logarithmic negativity as an entanglement measure, see, e.g., \cite{Laurat2005}. However, this measure is not directly applicable  to continuous-mode fields. Instead, the EPR entanglement of two freely propagating fields containing a continuum of modes, say $\xi_{out,1}$ and $\xi_{out,2}$, can be evaluated in the frequency domain by the two-mode squeezing spectra $V_+(\imath\omega)$  and $V_-(\imath\omega)$ \cite{Braunstein2005,Ou1992,Vitali2006}, that will be defined below.

The Fourier transform of $f(t)$ is defined as $F\left(\imath\omega\right)=\frac{1}{\sqrt{2\pi}}\int_{-\infty}^{\infty} f\left(t\right)e^{-\imath\omega t} dt$. Similarly, we have the Fourier transforms of $\xi_{out,1}(t)$, $\xi_{out,2}(t)$, $z(t)$ and $\xi(t)$ in (\ref{eq:dynamics}) and (\ref{eq:output}) as  $\tilde \Xi_{out,1}\left(\imath\omega\right)$, $\tilde \Xi_{out,2}\left(\imath\omega\right)$, $Z(\imath \omega)$  and  $\Xi(\imath \omega)$, respectively. 
Applying (\ref{eq:dynamics}), (\ref{eq:output}), we have
\begin{eqnarray}
\tilde \Xi_{out,1}^q(\imath \omega)+\tilde \Xi_{out,2}^q(\imath \omega) 
&=& \int_{-\infty}^{\infty} \xi_{out,1}^q(t)e^{-\imath\omega t} dt+\int_{-\infty}^{\infty} \xi_{out,2}^q(t)e^{-\imath\omega t}  dt \nonumber \\
&=& [1\ 0\ 1\ 0] \left(C Z\left(\imath\omega\right)+ D\Xi\left(\imath\omega\right)\right), \nonumber\\
\tilde \Xi_{out,1}^p(\imath \omega)-\tilde \Xi_{out,2}^p(\imath \omega)
&=& \int_{-\infty}^{\infty} \xi_{out,1}^p(t)e^{-\imath\omega t} dt-\int_{-\infty}^{\infty} \xi_{out,2}^p(t)e^{-\imath\omega t}  dt \nonumber \\
&=&[0\ 1\ 0\ {-}1]\left(C Z\left(\imath\omega\right)+ D\Xi\left(\imath\omega\right)\right).
\end{eqnarray}

The two-mode squeezing spectra $V_{+}(\imath \omega)$ and $V_{-}(\imath \omega)$ are real functions defined via the identities
\begin{eqnarray}
 \langle (\tilde \Xi_{out,1}^q(\imath \omega)+\tilde \Xi_{out,2}^q(\imath \omega))^* (\tilde \Xi_{out,1}^q(\imath \omega')+\tilde \Xi_{out,2}^q(\imath \omega')) \rangle &= & V_+(\imath \omega)\delta(\omega-\omega'), \nonumber \\
 \langle (\tilde \Xi_{out,1}^p(\imath \omega)-\tilde \Xi_{out,2}^p(\imath \omega))^* (\tilde \Xi_{out,1}^p(\imath \omega')-\tilde \Xi_{out,2}^p(\imath \omega')) \rangle  &=&  V_-(\imath \omega) \delta(\omega-\omega'),
\end{eqnarray}
where $\langle \cdot \rangle$ denotes quantum expectation. As described in \cite{Gough2010,Nurdin2012}, $V_+(\imath \omega)$ and $V_-(\imath \omega)$ are easily calculated by,
\begin{eqnarray}
V_+(\imath\omega)=& {\rm Tr}\left[H_1(\imath\omega)^* H_1(\imath\omega)\right], \label{eq:V_+}\\
V_-(\imath\omega)=& {\rm Tr}\left[H_2(\imath\omega)^* H_2(\imath\omega)\right], \label{eq:V_-}
\end{eqnarray}
where $H_1=[1\ 0\ 1\ 0]H$, $H_2=[0\ 1\ 0\ {-}1]H$ and $H$ is the transfer function
\begin{eqnarray}
H(\imath\omega)=C\left(\imath\omega I-A \right)^{-1}B+D. \label{eq:transfer-function}
\end{eqnarray}
The fields $\xi_{out,1}$ and  $\xi_{out,2}$ to be EPR-entangled at the frequency $\omega$ rad/s is \cite{Vitali2006},
\begin{eqnarray}
V(\imath\omega) = V_+(\imath\omega)+V_-(\imath\omega)< 4,  \label{eq:entanglement-criterion}
\end{eqnarray}
which indicates that the two-mode squeezing level is below the quantum shot-noise limit.

A perfect Einstein-Podolski-Rosen state is represented by an infinite bandwidth two-mode squeezing, that is $V(\imath\omega) = V_{\pm}(\imath\omega)= 0$ for all $\omega$. Of course, such an ideal EPR correlation cannot be achieved in reality as it would require an infinite amount of energy to produce. Thus, we aim to optimize EPR entanglement by making $V(\imath \omega)$ as small as possible over a wide frequency range \cite{Vitali2006}. 

Note that (\ref{eq:entanglement-criterion}) is a sufficient condition for EPR entanglement, with the two beams squeezed in amplitude and phase quadratures. However, in general, they may be squeezed in other quadratures. Hence, we give the following definition of EPR entanglement.
Let $\xi^{\psi_1}_{out,1}=e^{\imath \psi_1}\xi_{out,1}$, $\xi^{\psi_2}_{out,2}=e^{\imath \psi_2}\xi_{out,2}$ with $\psi_1, \psi_2 \in(-\pi,\pi]$ and denote the corresponding two-mode squeezing spectra between $\xi^{\psi_1}_{out,1}$ and $\xi^{\psi_2}_{out,2}$  as $V^{\psi_1, \psi_2}_\pm(\imath\omega,\psi_1, \psi_2)$.
\begin{definition}
Fields $\xi_{out,1}$ and  $\xi_{out,2}$ are EPR entangled at the frequency $\omega$ rad/s if $\exists ~\psi_1, \psi_2 \in(-\pi,\pi]$ such that
\begin{eqnarray}
 V^{\psi_1, \psi_2}_+(\imath\omega,\psi_1, \psi_2)+V^{\psi_1, \psi_2}_-(\imath\omega,\psi_1, \psi_2)< 4. \label{eq:entanglement-criterion-2}
\end{eqnarray}
Unless otherwise specified, throughout the paper, EPR entanglement refers to the case with $\psi_1=\psi_2=0$.
EPR entanglement is said to vanish at $\omega$ if there are no values of $\psi_1$ and $\psi_2$ satisfying the above criterion.
\end{definition}

\subsection{The nondegenerate optical parametric amplifier (NOPA)}
\label{sec:NOPA}
A NOPA ($G_i$) is an open linear quantum system containing a two-ended cavity with a pair of orthogonally polarized bosonic modes $a_i$ and $b_i$ which satisfy  $[a_i, a_j^*]=\delta_{ij}$, $[b_i, b_j^*]=\delta_{ij}$, $[a_i, b_j^*]=0$ and $[a_i, b_j]=0$. By assuming a strong undepleted coherent pump beam onto the $\chi^{(2)}$ nonlinear crystal inside the cavity, the pump can be treated as a classical field (hence, quantum vacuum fluctuations are ignored) and the interaction of the modes $a_i$ and $b_i$ with the pump is modelled by the two-mode squeezing Hamiltonian $H= \frac{\imath}{2} \epsilon\left( a_i^* b_i^*- a_ib_i\right)$, where $\epsilon$ is a real coefficient relating to the effective amplitude of the pump beam, for details see \cite{bGardiner2004, bCarmichael2008, bBachor2009}.

As shown in Fig.~\ref{fig:single-NOPA}, interactions between the NOPA and its environment are denoted by coupling operators as follows. Modes $a_i$ and $b_i$ are coupled to ingoing fields $\xi_{in,a,i}$ and $\xi_{in,b,i}$ via coupling operators $L_1=\sqrt{\gamma}a_i$ and $L_2=\sqrt{\gamma}b_i$, respectively. Unwanted amplification losses $\xi_{loss,a,i}$ and $\xi_{loss,b,i}$ impact the NOPA through operators $L_3=\sqrt{\kappa}a_i$ and $L_4=\sqrt{\kappa}b_i$, respectively. The constants $\gamma$ and $\kappa$ are damping rates of the outcoupling mirrors (from which the output fields emerge from the NOPA), and of the loss channels, respectively. 
Applying Section \ref{sec:linear_sys}, we have the dynamics of the NOPA as \cite{Ou1992, Nurdin2009, Collett1984, Gardiner1985}
\begin{eqnarray}
\dot{a_i}\left(t\right)&=&-\left(\frac{\gamma+\kappa}{2}\right)a_i\left(t\right)+\frac{\epsilon}{2}b_i^*\left(t\right)-\sqrt{\gamma}\xi_{in,a,i}\left(t\right)-\sqrt{\kappa}\xi_{loss,a,i}\left(t\right),\nonumber \\
\dot{b_i}\left(t\right)&=&-\left(\frac{\gamma+\kappa}{2}\right)b_i\left(t\right)+\frac{\epsilon}{2}a_i^*\left(t\right)-\sqrt{\gamma}\xi_{in,b,i}\left(t\right)-\sqrt{\kappa}\xi_{loss,b,i}\left(t\right), \label{eq:NOPA-dynamics1}
\end{eqnarray}
following the boundary conditions \cite{Ou1992, bGardiner2004}, we have outputs
\begin{eqnarray}
\xi_{out,a,i}\left(t\right)&=&\sqrt{\gamma}a_i\left(t\right)+\xi_{in,a,i}\left(t\right),\nonumber \\
\xi_{out,b,i}\left(t\right)&=&\sqrt{\gamma}b_i\left(t\right)+\xi_{in,b,i}\left(t\right). \label{eq:NOPA-dynamics2}
\end{eqnarray}

Define the following quadrature vectors of the NOPA, 
\begin{eqnarray}
z &=& [a^q_i, a^p_i, b^q_i, b^p_i]^T, \nonumber \\
\xi &=&[\xi^q_{in,a,i},\xi^p_{in,a,i},\xi^q_{in,b,i},\xi^p_{in,b,i}, \xi^q_{loss,a,i},\xi^p_{loss,a,i}, \xi^q_{loss,b,i},\xi^p_{loss,b,i}]^T,\nonumber\\
\xi_{out}&=&[\xi^q_{out,a,i},\xi^p_{out,a,i},\xi^q_{out,b,i},\xi^p_{out,b,i}]^T,
\end{eqnarray}
From (\ref{eq:dynamics}), (\ref{eq:output}) and (\ref{eq:transfer-function}), the transfer function of the NOPA is
\begin{eqnarray}
H_{N}=
\left[\begin{array}{cccccccc}
h_1 & 0 & h_2 & 0 & h_3 & 0 & h_4 & 0 \\
0 & h_1 & 0 & -h_2 & 0 & h_3 & 0 & -h_4 \\
h_2 & 0 & h_1 & 0 & h_4 & 0 & h_3 & 0 \\
0 & -h_2 & 0 & h_1 & 0 & -h_4 & 0 & h_3
\end{array} \right],\label{eq:NOPAtf}
\end{eqnarray}
where $h_j~ (j=1,2,3,4)$ are functions of the frequency $\omega$,
\begin{eqnarray}
h_1(\imath \omega) &=&\frac{\epsilon^2+\gamma^2-(\kappa+2\imath\omega)^2}{\epsilon^2-(\gamma+\kappa+2\imath\omega)^2},\nonumber\\
h_2(\imath \omega) &=&\frac{2\epsilon\gamma}{\epsilon^2-(\gamma+\kappa+2\imath\omega)^2},\nonumber\\
h_3 (\imath \omega) &=&\frac{2\sqrt{\gamma\kappa}(\gamma+\kappa+2\imath\omega)}{\epsilon^2-(\gamma+\kappa+2\imath\omega)^2},\nonumber\\
h_4 (\imath \omega) &=&\frac{2\epsilon\sqrt{\gamma\kappa}}{\epsilon^2-(\gamma+\kappa+2\imath\omega)^2}. \label{eq:h_freq_dependent}
\end{eqnarray}

As reported in \cite{Nurdin2009,Iida2012}, parameters of the NOPA are set as follows. We set the reference value for the transmissivity rate of the mirrors $\gamma_{r}=7.2\times10^7$ Hz. The pump amplitude $\epsilon$ is adjustable as $\epsilon=x\gamma_r$, where the variable $x$ satisfies $0<x\leq 1$. We fix the damping rate $\gamma=\gamma_r$ and set $\kappa=K\epsilon$ with $K=\frac{3\times 10^6}{\sqrt{2}\times 0.6 \times \gamma_r}$  based on the assumption that the value of $\kappa$ is proportional to the absolute value of $\epsilon$ and $\kappa= \frac{3 \times 10^6}{\sqrt{2}}$ when $\epsilon=0.6 \gamma_r$.   In this paper, we consider the NOPAs have infinite bandwidth case where we take the limit $\gamma_r\rightarrow\infty$ while keeping $\epsilon$ and $\gamma$ at a fixed ratio $\frac{\epsilon}{\gamma}=x$. In such a case, the transfer function of the NOPA in (\ref{eq:NOPAtf}) becomes a constant matrix with elements
\begin{eqnarray}
h_1&=&\frac{(1-K^2)x^2+1}{x^2-(1+Kx)^2},\nonumber\\
h_2&=&\frac{2x}{x^2-(1+Kx)^2},\nonumber\\
h_3&=&\frac{2\sqrt{Kx}(1+Kx)}{x^2-(1+Kx)^2},\nonumber\\
h_4&=&\frac{2x\sqrt{Kx}}{x^2-(1+Kx)^2}. \label{eq:h_coeffi_static}
\end{eqnarray}

It can be seen that for $\omega \ll \epsilon, \gamma, \kappa$, the constant scalar values of $h_1$ to $h_4$ given by (\ref{eq:h_coeffi_static}) in the infinite bandwidth limit approximates the frequency dependent values given in (\ref{eq:h_freq_dependent}) when the bandwidth is finite. Such an approximation is quite accurate for $\omega$ sufficiently small, away from $\epsilon, \gamma, \kappa$ (with no error at $\omega=0$). Since in practice the EPR entanglement will be in the low frequency region, entanglement in the idealised  infinite bandwidth scenario provides a good approximation for the entanglement that can be expected in the finite bandwidth case.

\section{The system model}
\label{sec:system-model}
Consider again the coherent feedback system shown in Fig.~\ref{fig:system}. The whole network consists of two NOPAs and a static passive linear subsystem. The subsystem has two inputs $\xi'_{out,a,1}$ and $\xi'_{out,b,2}$ connected to the outgoing fields $\xi_{out,a,1}$ of NOPA $G_1$ and $\xi_{out,b,2}$ of NOPA $G_2$, respectively. The two outputs $\xi'_{in,b,1}$ and $\xi'_{in,a,2}$ of the subsystem are connected to incoming signals $\xi_{in,b,1}$ of NOPA $G_1$ and $\xi_{in,a,2}$ of NOPA $G_2$, respectively. The incoming fields of the system $\xi_{in,a,1}$ and $\xi_{in,b,2}$ are in the vacuum state \cite{Nurdin2009} and the EPR entanglement of interest is generated between outgoing fields $\xi_{out,b,1}$ and $\xi_{out,a,2}$. 
The transfer function of the passive static subsystem is a $2 \times 2$ complex unitary matrix denoted by $\tilde S$, which satisfies \cite{Nurdin2009}
\begin{eqnarray}
\left[ \begin{array}{c}
\xi'_{in,b,1}\\ \xi'_{in,a,2} \end{array} \right] 
= \tilde S \left[ \begin{array}{c}
\xi'_{out,a,1}\\ \xi'_{out,b,2} \end{array} \right], \label{eq:S}
\end{eqnarray}
and 
\begin{equation}
\tilde S^* \tilde S = \tilde S \tilde S^*= I_2. \label{eq：complex-unitary}
\end{equation} 
Also, we shall denote the static passive matrix $\tilde{S}$ corresponding to the dual-NOPA coherent feedback network shown in Fig.~\ref{fig:dual-NOPA-cfb} as \cite{SN2015qip} 
\begin{eqnarray}
\tilde{S}_{cfb} &=& \left[\begin{array}{cc} 0 & 1\\ 1 & 0\end{array}\right].
\label{eq:Scfb} \end{eqnarray}

\begin{figure}[htbp]
\begin{center}
\includegraphics[scale=0.6]{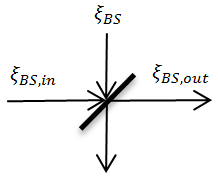}
\caption{Beamsplitter.}\label{fig:BS}
\end{center}
\end{figure}

The NOPAs are placed at two distant communicating ends (Alice and Bob). The distance between the two ends is $d$ kilometres. Both NOPAs ($G_1$ and $G_2$) have  identical static transfer functions given by (\ref{eq:NOPAtf}) and (\ref{eq:h_coeffi_static}). Transmission loss in each path of the network is modelled by a beamsplitter with an unwanted incoming vacuum noise $\xi_{BS}$, as shown in Fig.~\ref{fig:BS}. 
The other input is connected to an outgoing field of the NOPAs or the subsystem. The outgoing signal $\xi_{BS,out}$ of the beamsplitter is the combination of the two incoming signals, satisfying $\xi_{BS,out}=\alpha\xi_{BS,in}+\beta\xi_{BS}$, where $\alpha$ is the transmission rate and $\beta$ is the reflection rate of the beamsplitter. $\alpha$ and $\beta$ are positive real parameters obeying $0\leq \alpha, \beta \leq 1$ and $\alpha^2+\beta^2=1$ \cite{bGerry2005}.
Based on the fact that transmission loss in optical fibre is about $0.2$ dB per kilometre at telecom wavelengths as reported in \cite{Jacobs2002}, the transmission rate of each beamsplitter in our system is $\alpha=10^{-0.005d}$.

Technically, transmission losses are accompanied by time delays in the transmission. However, here we neglect the time delays in transmission.  
Nonetheless, the formalism and Heisenberg picture analysis employed here can easily treat the presence of time delays in the linear quantum networks considered herein, see  \cite{SN2015qip, SN2015qic, Gough2010, Nurdin2012}.
In previous works on related studies \cite{SN2015qip, SN2015qic, Nurdin2012}, the effect of time delays has only been to narrow the bandwidth over which the EPR entanglement exists, without affecting the EPR entanglement that can be achieved in the low frequency region. 

Define the following vectors of quadratures
\begin{eqnarray}
z&=&[a^q_1, a^p_1, b^q_1, b^p_1,a^q_2, a^p_2, b^q_2, b^p_2]^T,\nonumber \\
\xi_{in}&=&[\xi^q_{in,a,1},\xi^p_{in,a,1},\xi^q_{in,b,2},\xi^p_{in,b,2}]^T, \nonumber\\
\xi_{loss}&=&[\xi^q_{loss,a,1},\xi^p_{loss,a,1},\xi^q_{loss,b,1},\xi^p_{loss,b,1},\xi^q_{loss,a,2},\xi^p_{loss,a,2},\xi^q_{loss,b,2},\xi^p_{loss,b,2}]^T, \nonumber\\
\xi_{BS}&=&[\xi^q_{BS,a,1},\xi^p_{BS,a,1},\xi^q_{BS,b,1},\xi^p_{BS,b,1},\xi^q_{BS,a,2},\xi^p_{BS,a,2},\xi^q_{BS,b,2},\xi^p_{BS,b,2}]^T, \nonumber\\
\xi &= &[\xi_{in}^T,\xi_{loss}^T,\xi_{BS}^T]^T, \nonumber\\
\xi_{out}&=&[\xi^q_{out,1},\xi^p_{out,1},\xi^q_{out,2},\xi^p_{out,2}]^T. \label{e1:sys_quadratures}
\end{eqnarray}
Define the real unitary matrix $S$ as the quadrature form of matrix $\tilde S$. Based on the definitions of the quadratures (\ref{eq: quadratures}), we obtain
\begin{eqnarray}
S=\frac{1}{2}\tilde{K}\tilde{S}\tilde{K}^* + \frac{1}{2}\tilde{K}^\# \tilde{S}^\#\tilde{K}^T,  \label{eq:relations-real-complex-matrix}
\end{eqnarray}
where 
\begin{eqnarray}
\tilde{K}= I_2 \otimes \left[ \begin{array}{c} 1\\ -\imath  \end{array} \right]. \label{eq:Ktilde}
\end{eqnarray}
Note the quadrature form $S$ is, by construction, a unitary symplectic matrix. That is, $S$ is unitary and symplectic, the latter meaning that $S^{\top} \left[ \begin{array}{cc} 0 & 1 \\ -1 & 0 \end{array} \right] S = \left[ \begin{array}{cc} 0 & 1 \\ -1 & 0 \end{array} \right]$. 

Using the static transfer function of a NOPA given by (\ref{eq:NOPAtf}) and (\ref{eq:h_coeffi_static}), and given the unitary matrix $\tilde S$ representing  the passive static subsystem, we obtain the static linear transformation $H(S)$ ($\xi_{out}=H(S)\xi$) of the dual-NOPA coherent feedback static system as a function of  $S$,
\begin{eqnarray}
H(S)= \tilde{H}_2+ h_1P[\alpha^2 S\tilde{H}_1~ H_{BS}],
\end{eqnarray}
where
\begin{eqnarray}
P &=& (I_4 - \alpha^2 S (I_2 \otimes \tilde{h}_2))^{-1}, \nonumber \\
H_{BS} &=& \beta \left[\begin{array}{ccc} O_{4 \times 2} & I_4 & O_{4 \times 2} \end{array} \right]+\alpha\beta S
\left[\begin{array}{cccccccc}
1& 0& 0& 0& 0& 0& 0& 0\\
0& 1& 0& 0& 0& 0& 0& 0\\
0& 0& 0& 0& 0& 0& 1& 0\\
0& 0& 0& 0& 0& 0& 0& 1
\end{array} \right],\nonumber\\
\tilde{H}_1 &=&
\left[\begin{array}{cccccc}
\tilde{h}_1 & O_2 & \tilde{h}_3 & \tilde{h}_4 & O_2 & O_2 \\
O_2 & \tilde{h}_1 & O_2 & O_2 & \tilde{h}_4 &\tilde{h}_3 
\end{array} \right],\nonumber\\
\tilde{H}_2 &=&
\left[\begin{array}{cc} \left[\begin{array}{cccccc}
\tilde{h}_2 & O_2 & \tilde{h}_4 & \tilde{h}_3 & O_2 & O_2 \\
O_2 & \tilde{h}_2 & O_2 & O_2 & \tilde{h}_3 &\tilde{h}_4 
\end{array} \right]& O_{4 \times 8} \end{array}\right],\nonumber\\
\tilde{h}_1&=&I_2 \otimes h_1, ~~~~~~~~~~~ \tilde{h}_3=I_2 \otimes h_3, \nonumber\\
\tilde{h}_2&=&\left[\begin{array}{cc} h_2 & 0\\ 0 & -h_2\end{array} \right], ~~ \tilde{h}_4=\left[\begin{array}{cc} h_4 & 0\\ 0 & -h_4\end{array} \right].\label{eq:Ht}
\end{eqnarray}
\section{Optimization of $\tilde S$}
\label{sec:optimization}
In this section, we aim to optimize the EPR entanglement generated in by the dual-NOPA coherent feedback  system of Fig.~\ref{fig:system},  in the infinite bandwidth limit, by  finding a complex unitary matrix at which the two-mode squeezing spectra of the two outgoing fields are locally minimized, with respect of  $\tilde{S}$.  Since the system is infinite bandwidth,  $V_\pm(\imath \omega)=V_\pm(0)$  for all $\omega$, thus we shall denote $V(\imath \omega)$ and $V_\pm(\imath \omega)$ simply as  $V$ and $V_\pm $, with no dependence on $\omega$. Based on (\ref{eq:V_+}), (\ref{eq:V_-}) and (\ref{eq:entanglement-criterion}), the sum of the two-mode squeezing spectra is
\begin{eqnarray}
V &=& V_++V_- \nonumber \\
&=&\operatorname{Tr}\left[H_1^*H_1+H_2^*H_2\right], \nonumber \\
&=&\operatorname{Tr}\left[H(S)^*M_{1,2} H(S)  \right] \label{eq:entanglement}
\end{eqnarray}
where
\begin{eqnarray}
M_{1,2} &=&  \left[\begin{array}{cccc} 1 & 0 & 1 & 0 \\ 0 & 1 & 0 & -1 \\1 & 0 & 1 & 0 \\0 & -1 & 0 & 1 \end{array}\right].  
\end{eqnarray}
As $V$ is a function of $\tilde S$ or $S$, we define $V(\tilde S)$ as the value of $V$ for a fixed value of $\tilde S$, and $V(S)$ as the value of $V$ for a fixed value of $S$. 

We aim to find a complex unitary matrix $\tilde S$ as a local minimizer of  the cost function $V(\tilde{S})$. The optimization problem with a unitary constraint can be solved by the method of modified steepest descent on a Stiefel manifold introduced in \cite{Manton2002}, which employs the first-order derivative of the cost function. The Stiefel manifold in our problem is the set $ St(2,2)=\left\lbrace\tilde{S} \in \mathbb{C}^{2 \times 2} : \tilde{S}^* \tilde{S} = I\right\rbrace$.  

Since $(I-Y)^{-1}=(I-Y)^{-1}(I+Y-Y)=I+(I-Y)^{-1}Y$ for any square matrix $Y$ such that $I-Y$ is invertible, we expand $H(S+\Delta S)$ as $H(S)+ H(\Delta S)+ H(\Delta S^2)+O(\Delta S^3)$, where $O(\Delta S^3)$ denotes terms that are products containing at least three $\Delta S$. $H(S)$, $H(\Delta S)$ and $H(\Delta S^2)$ are real matrices,
\begin{eqnarray}
H(\Delta S)&=&P \Delta S Q \nonumber\\
H(\Delta S^2)&=&\alpha^2 P \Delta S (I_2 \otimes \tilde{h}_2)P \Delta S Q, \label{H1H2}
\end{eqnarray}
where
\begin{eqnarray}
Q = \alpha^2 h_1 \left[\begin{array}{cc}\left(I_4 + \alpha^2 \left(I_2 \otimes \tilde{h}_2 \right) P S\right)\tilde{H}_1 & (I_2 \otimes \tilde{h}_2) P H_{BS} \end{array}\right].
\end{eqnarray}
Following (\ref{eq:entanglement}) and based on the facts that a matrix and its transpose have the same trace, we have
\begin{eqnarray}
 V(S +\Delta S) &=&\operatorname{Tr} \left[H(S+\Delta S)^* M_{1,2} H(S+\Delta S) \right] \nonumber\\
&=& V(S) + \operatorname{Tr}[H(\Delta S)^*M_{1,2} H(S) + H(S)^*M_{1,2} H(\Delta S) +H(\Delta S^2)^*M_{1,2} H(S) \nonumber\\
&& \quad + H(S)^*M_{1,2} H(\Delta S^2)+ H(\Delta S)^*M_{1,2} H(\Delta S)]+ O(\lVert\Delta S\rVert^3) \nonumber\\
&=& V(S) + 2\operatorname{Tr}[M H(\Delta S)]+ 2\operatorname{Tr}[M H(\Delta S^2)]\nonumber\\
&& \quad + \operatorname{Tr}[H(\Delta S)^*M_{1,2} H(\Delta S)] +O(\lVert\Delta S\rVert^3),
\end{eqnarray}
where $M=H(S)^*M_{1,2}$ and $O(\lVert\Delta S\rVert^3)$ denotes that the function $O(\lVert\Delta S\rVert^3)$ satisfies $\frac{O(\lVert\Delta S\rVert^3)}{\lVert\Delta S\rVert^3} \leq c$ for some positive constant $c$ for all $\lVert\Delta S\rVert>0$ sufficiently small.
Furthermore, based on (\ref{eq:relations-real-complex-matrix}), we obtain that
\begin{eqnarray}
 V( \tilde{S} + \Delta \tilde{S}) 
=V(\tilde{S})+ \operatorname{Re}\operatorname{Tr}[\Delta\tilde{S}^*D_{\tilde{S}}] +\frac{1}{2}\left[\begin{array}{c}
\operatorname{vec}(\Delta\tilde{S})\\
\operatorname{vec}(\Delta\tilde{S}^\#)
\end{array} \right]^* X \left[\begin{array}{c}
\operatorname{vec}(\Delta\tilde{S})\\
\operatorname{vec}(\Delta\tilde{S}^\#)
\end{array} \right]+ O(\lVert\Delta\tilde S\rVert^3), \label{eq:expansion_V}
\end{eqnarray}
where 
\begin{eqnarray}
D_{\tilde{S}} &=& 2\tilde{K}^* (QMP)^T \tilde{K} ,\nonumber \\
X &=& \frac{1}{4} \left[\begin{array}{cc}(\tilde{K}^\# \otimes \tilde{K}) &(\tilde{K}^\# \otimes \tilde{K})^\# \end{array}\right]^* h \left[\begin{array}{cc}(\tilde{K}^\# \otimes \tilde{K}) &(\tilde{K}^\# \otimes \tilde{K})^\# \end{array}\right],\nonumber\\
h &=&4\alpha^2 L^T(QMP)^T \otimes ((I_2 \otimes \tilde{h}_2)P) +2(QQ^T) \otimes (P^TM_{1,2}P), \label{eq:DsX}
\end{eqnarray}
\begin{eqnarray}
L=\left[\begin{array}{cccccccccccccccc} 1&0&0&0&0&0&0&0&0&0&0&0&0&0&0&0\\
0&0&0&0&1&0&0&0&0&0&0&0&0&0&0&0\\
0&0&0&0&0&0&0&0&1&0&0&0&0&0&0&0\\
0&0&0&0&0&0&0&0&0&0&0&0&1&0&0&0\\
0&1&0&0&0&0&0&0&0&0&0&0&0&0&0&0\\
0&0&0&0&0&1&0&0&0&0&0&0&0&0&0&0\\
0&0&0&0&0&0&0&0&0&1&0&0&0&0&0&0\\
0&0&0&0&0&0&0&0&0&0&0&0&0&1&0&0\\
0&0&1&0&0&0&0&0&0&0&0&0&0&0&0&0\\
0&0&0&0&0&0&1&0&0&0&0&0&0&0&0&0\\
0&0&0&0&0&0&0&0&0&0&1&0&0&0&0&0\\
0&0&0&0&0&0&0&0&0&0&0&0&0&0&1&0\\
0&0&0&1&0&0&0&0&0&0&0&0&0&0&0&0\\
0&0&0&0&0&0&0&1&0&0&0&0&0&0&0&0\\
0&0&0&0&0&0&0&0&0&0&0&1&0&0&0&0\\
0&0&0&0&0&0&0&0&0&0&0&0&0&0&0&1
\end{array}\right].\nonumber
\end{eqnarray} 
$D_{\tilde{S}}$ is the directional derivative of $V(\tilde{S})$ at $\tilde{S}$ in the direction $\Delta \tilde S$ \cite{Manton2002}.

\begin{theorem}
\label{th:critical_point} The matrix $\tilde{S}_{cfb}$ corresponding to the dual-NOPA coherent feedback system given by (\ref{eq:Scfb}) is a critical point of the function $V(\tilde{S})$.
\end{theorem}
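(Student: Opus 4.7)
The plan is to invoke the first-order expansion (\ref{eq:expansion_V}) and verify the critical point condition for functions on the Stiefel manifold $St(2,2)$. At a unitary matrix $\tilde S$, the tangent space consists of perturbations $\Delta\tilde S$ satisfying $\tilde S^* \Delta\tilde S + \Delta\tilde S^* \tilde S = 0$, i.e., $\tilde S^* \Delta\tilde S$ is skew-Hermitian. Rewriting the linear term of (\ref{eq:expansion_V}) as $\operatorname{Re}\operatorname{Tr}[\Delta\tilde S^* D_{\tilde S}] = \operatorname{Re}\operatorname{Tr}[(\tilde S^* \Delta\tilde S)^* (\tilde S^* D_{\tilde S})]$, this real inner product vanishes for every skew-Hermitian $\tilde S^* \Delta\tilde S$ if and only if $\tilde S^* D_{\tilde S}$ is Hermitian. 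Thus the task reduces to computing $D_{\tilde S_{cfb}}$ explicitly and verifying that $\tilde S_{cfb}^* D_{\tilde S_{cfb}}$ is Hermitian.

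First I would compute the quadrature form $S_{cfb}$ of $\tilde S_{cfb}$ via (\ref{eq:relations-real-complex-matrix}). Since $\tilde S_{cfb}$ is the permutation swapping the two optical channels, its real quadrature form $S_{cfb}$ is the $4\times 4$ block permutation exchanging the two $(q,p)$ quadrature pairs, a real orthogonal and symplectic matrix. Substituting into $P = (I_4 - \alpha^2 S_{cfb}(I_2 \otimes \tilde h_2))^{-1}$ gives a closed-form inverse because $S_{cfb}(I_2 \otimes \tilde h_2)$ is block-antidiagonal with identical $2\times 2$ blocks $\tilde h_2$. I would then assemble $H(S_{cfb})$ from the formula in the definition following (\ref{eq:Ht}), compute $M = H(S_{cfb})^* M_{1,2}$, and obtain $Q$ at this point from (\ref{H1H2}). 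Finally I would form $D_{\tilde S_{cfb}} = 2\tilde K^* (QMP)^T \tilde K$.

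The final step is to left-multiply by $\tilde S_{cfb}^*$ and check Hermiticity of the resulting $2\times 2$ matrix. The dual-NOPA configuration is invariant under the $\mathbb{Z}_2$ symmetry that swaps the roles of the two NOPAs, and $\tilde S_{cfb}$ is a fixed point of this action; this makes it natural to expect $\tilde S_{cfb}^* D_{\tilde S_{cfb}}$ to be a real scalar multiple of $I_2$, or at worst a real diagonal matrix, in which case Hermiticity is automatic. Concretely this means verifying that the two diagonal entries are real and that the two off-diagonal entries are complex conjugates of each other (preferably both zero).

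The main obstacle will be the algebraic bookkeeping in pushing $S_{cfb}$ through the block expressions for $P$, $Q$, $\tilde H_1$, $\tilde H_2$, and $H_{BS}$, and reducing $2\tilde K^*(QMP)^T \tilde K$ to a symbolic form in the scalars $h_1,\dots,h_4,\alpha,\beta$ whose Hermiticity can be read off. I would exploit the block-antidiagonal structure of $S_{cfb}$ and the block-diagonal structure of $I_2\otimes\tilde h_2$ at each stage rather than expanding everything monolithically, so that the constant-$\omega$ scalars $h_j$ can be carried through as opaque symbols. As a fallback, a symmetry argument using the channel-swap action on $St(2,2)$, together with the identification of the normal space of $St(2,2)$ at $\tilde S_{cfb}$ with the Hermitian span $\{\tilde S_{cfb} H : H = H^*\}$, should yield the critical point property without an exhaustive expansion.
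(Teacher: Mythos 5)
Your plan is correct and essentially the paper's own argument: the condition you derive --- that $\tilde S_{cfb}^* D_{\tilde S_{cfb}}$ be Hermitian --- is algebraically equivalent to the vanishing of Manton's tangent-space gradient $\tilde{S}D_{\tilde{S}}^*\tilde{S}-D_{\tilde{S}}$ that the paper checks, and the explicit computation indeed gives $D_{\tilde S_{cfb}} = d_{\tilde S_{cfb}}\,\tilde S_{cfb}$ with $d_{\tilde S_{cfb}}$ real, so $\tilde S_{cfb}^* D_{\tilde S_{cfb}} = d_{\tilde S_{cfb}} I_2$ exactly as you anticipated. (Only your fallback pure-symmetry argument would not suffice by itself, since the $\mathbb{Z}_2$ swap action fixes a whole submanifold of $U(2)$ and therefore only constrains the gradient to lie in a nonzero invariant subspace of the tangent space; your primary, computational route is the sound one and is what the paper carries out.)
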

\begin{proof}
According to \cite{Manton2002}, we have a one-to-one corresponding cost function $g_{\tilde S}(\Delta\tilde{S})$ on the tangent space to the Stiefel manifold at the point $\tilde S$, with $\Delta \tilde S$ a vector on this tangent space, defined by $g_{\tilde S}(\Delta\tilde{S})=V(\pi(\tilde{S}+\Delta\tilde{S}))$, where $\pi(\cdot)$ is the projection operator onto the manifold. The descent direction $Z_d$ in the tangent space at $\tilde{S}$ is 
\begin{eqnarray}
Z_d=\tilde{S}D_{\tilde{S}}^*\tilde{S}-D_{\tilde{S}}.
\end{eqnarray}
Based on (\ref{eq:DsX}), when $\tilde{S}=\tilde{S}_{cfb}$,  $D_{\tilde{S}}$ becomes
\begin{eqnarray}
D_{\tilde{S}_{cfb}}&=& d_{\tilde{S}_{cfb}} \left[\begin{array}{cc} 0 & 1\\ 1 & 0\end{array}\right],
\label{eq:DScfb} \end{eqnarray}
where $d_{\tilde{S}_{cfb}}$ is a real coefficient
\begin{eqnarray}
d_{\tilde{S}_{cfb}}&=&\frac{1}{(1 + 2 \alpha^2 x + 2K x - x^2 +K^2 x^2)^3}(4 \alpha^2 (-1 \nonumber\\
&&+ (-1 + K^2) x^2) (4 x (1 + 2 K x + x^2 + K^2 x^2)  \nonumber\\
&&+ 2 \alpha^4 x (-1 + (-1 + K^2) x^2) + \alpha^2 (-1 - 2 K x \nonumber\\
&&+ 6 K x^3 + 2 K^3 x^3 + x^4 - 2 K^2 x^4 + K^4 x^4))).
\end{eqnarray}
Thus, the descent direction $Z_d$ is
\begin{eqnarray}
Z_{d,cfb}=\tilde{S}_{cfb}D_{\tilde{S}_{cfb}}^*\tilde{S}_{cfb}-D_{\tilde{S}_{cfb}}=O_2.
\end{eqnarray}
Thus, the gradient of the  function $g_{\tilde S}(\Delta\tilde{S})$ at $\Delta \tilde S=0$ along the tangent space at $\tilde S_{cfb}$ is $\operatorname{grad}\left(g_{\tilde S}(0)\right)=\tilde{S}_{cfb}D_{\tilde{S}_{cfb}}^*\tilde{S}_{cfb}-D_{\tilde{S}_{cfb}}=O_2$ (see \cite{Manton2002}[Eq. (27)]), which establishes that $\tilde{S}_{cfb}$ is a critical point.
\end{proof}

Now we check the Hessian matrix of the function $g_{\tilde S}(\Delta\tilde{S})$.  Based on Proposition $12$ in \cite{Manton2002} and (\ref{eq:expansion_V}), we have following the second order expansion along any direction $\Delta \tilde S$ on the tangent space at $\tilde S$,
\begin{eqnarray}
g_{\tilde S}(\Delta\tilde{S})&=& V(\pi(\tilde{S} + \Delta \tilde S)) \nonumber\\
&=&V(\tilde{S})+ \operatorname{Re}\operatorname{Tr}[\Delta\tilde{S}^*D_{\tilde{S}}]+\frac{1}{2}\left[\begin{array}{c}
\operatorname{vec}(\Delta\tilde{S})\\
\operatorname{vec}(\Delta\tilde{S}^\#)
\end{array} \right]^* {\rm Hess}({\tilde S})\left[\begin{array}{c}
\operatorname{vec}(\Delta\tilde{S})\\
\operatorname{vec}(\Delta\tilde{S}^\#)
\end{array} \right]\nonumber\\
&&+ O(\lVert\Delta\tilde S\rVert^3), \label{eq:expansion_V2}
\end{eqnarray}
where 
\begin{eqnarray}
{\rm Hess}({\tilde S})=X-\frac{1}{2} \left[\begin{array}{cc}
(\tilde{S}^*D_{\tilde{S}})^T\otimes I_2 &O_4\\
O_4 & ((\tilde{S}^*D_{\tilde{S}})^T\otimes I_2 )^\#
\end{array} \right]\label{eq:Hess}
\end{eqnarray}
denotes the Hessian matrix of $g_{\tilde S}(\Delta\tilde{S})$. Firstly, we consider the system in an ideal case, where there are no losses ($\kappa=0$ and $\alpha=1$). As reported in \cite{SN2015qip}, in this lossless scenario the range of $x$ over which the dual-NOPA coherent feedback system is stable in the finite bandwidth case is $x \in [0, \sqrt{2}-1)$, independently of the actual bandwidth of the NOPAs. Thus, it is natural to also take this as the range of admissible values for $x$ in the infinite bandwidth limit of this paper. By checking eigenvalues of the Hessian matrix, we have the following theorem.
\begin{theorem}
\label{th:ideal} In the absence of transmission and amplification losses, $\tilde{S}_{cfb}$ is a local minimizer of the function $V(\tilde{S})$ when $x \in (\sqrt{5}-2, \sqrt{2}-1)$. 
\end{theorem}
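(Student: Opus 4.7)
Theorem~\ref{th:critical_point} already identifies $\tilde{S}_{cfb}$ as a critical point of $V(\tilde S)$ on the Stiefel manifold $St(2,2)=U(2)$, so I would apply the standard second-order sufficient condition for a strict local minimum on a Riemannian manifold: it suffices to show that the quadratic form defined by the Hessian ${\rm Hess}(\tilde S_{cfb})$ in (\ref{eq:Hess}) is positive definite when restricted to the (real) tangent space $T_{\tilde S_{cfb}} St(2,2)$. The plan has four steps: (i) specialize all building blocks to $\kappa=0$, $\alpha=1$ and $\tilde S=\tilde S_{cfb}$; (ii) obtain ${\rm Hess}(\tilde S_{cfb})$ as an $8\times 8$ matrix with entries rational in $x$; (iii) parametrize the tangent space and push the Hessian through; and (iv) characterize positive definiteness of the reduced form as a condition on $x$.

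For step (i), with $K=0$ one has $h_3=h_4=0$ and $h_1=(x^2+1)/(x^2-1)$, $h_2=2x/(x^2-1)$, so $\tilde H_1$, $\tilde H_2$ and $H_{BS}$ collapse to sparser rational forms in $x$. The quadrature form $S$ of $\tilde S_{cfb}$ is obtained directly from (\ref{eq:relations-real-complex-matrix}), after which $P=(I_4-S(I_2\otimes\tilde h_2))^{-1}$, $Q$, $M=H(S)^*M_{1,2}$, $D_{\tilde S_{cfb}}$ and $X$ are each closed-form rational expressions in $x$. The explicit scalar $d_{\tilde S_{cfb}}$ from the preceding proof, specialized at $\kappa=0$, $\alpha=1$, is a convenient sanity check for these intermediate quantities.

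For step (iii), the tangent space at $\tilde S_{cfb}$ consists of $\Delta\tilde S\in\mathbb{C}^{2\times 2}$ satisfying $\tilde S_{cfb}^*\Delta\tilde S+\Delta\tilde S^*\tilde S_{cfb}=0$. Since $\tilde S_{cfb}^*=\tilde S_{cfb}$ is the swap, solving this constraint entrywise shows that the tangent space is the real $4$-dimensional subspace $\left\{\left[\begin{array}{cc} a & \imath\beta \\ \imath\gamma & -a^*\end{array}\right] : a\in\mathbb{C},\ \beta,\gamma\in\mathbb{R}\right\}$. Writing $[\operatorname{vec}(\Delta\tilde S)^T,\operatorname{vec}(\Delta\tilde S^\#)^T]^T=E\mathbf v$ for an explicit $8\times 4$ real matrix $E$ and $\mathbf v=(\operatorname{Re} a,\operatorname{Im} a,\beta,\gamma)^T$ reduces the question to showing that the $4\times 4$ real symmetric matrix $E^*{\rm Hess}(\tilde S_{cfb})E$ is positive definite.

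Finally, for step (iv), I would compute the characteristic polynomial of $E^*{\rm Hess}(\tilde S_{cfb})E$ as a rational function of $x$ and identify the interval on which all four eigenvalues are strictly positive. The upper endpoint $x=\sqrt{2}-1$ should be inherited from the stability boundary discussed just before the theorem, while the lower endpoint $x=\sqrt{5}-2$ should emerge as the unique root in the stability region at which the smallest eigenvalue of the reduced Hessian changes sign. The main obstacle is entirely computational: the symbolic manipulation of the $8\times 8$ Hessian, its restriction to the tangent space, and the analysis of the resulting quartic eigenvalue problem are tedious enough to require a computer algebra system, but once carried out the positivity conditions should factor into polynomial inequalities in $x$ whose solution is precisely $(\sqrt{5}-2,\sqrt{2}-1)$.
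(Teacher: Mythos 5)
Your plan follows essentially the same route as the paper: specialize to $\alpha=1$, $\kappa=0$, evaluate the Hessian (\ref{eq:Hess}) at $\tilde{S}_{cfb}$ symbolically, and certify local minimality by positive definiteness of the resulting quadratic form; the paper likewise leans on a computer algebra system (Mathematica) for the eigenvalues. The one substantive difference is your step (iii): the paper does \emph{not} restrict ${\rm Hess}(\tilde S_{cfb})$ to the tangent space, but computes the eigenvalues of the full $8\times 8$ matrix, obtaining four values $e_1,\dots,e_4$ of which $e_1,e_2,e_4$ are positive on the whole stability range $(0,\sqrt{2}-1)$ while $e_3$ changes sign at $x=\sqrt{5}-2$; since positive definiteness of the full Hessian implies positive definiteness of its restriction to the tangent space (and to the real subspace on which the second half of the doubled vector is the conjugate of the first), this is a valid, if conservative, sufficient condition. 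Your restriction to the real $4$-dimensional tangent space --- which you parametrize correctly, given that $\tilde S_{cfb}$ is the Hermitian swap --- is the sharper second-order test; it certifies local minimality on an interval at least as large as $(\sqrt{5}-2,\sqrt{2}-1)$, and reproduces the paper's lower endpoint exactly when the $e_3$-eigenvector has a nonzero tangential component (something neither you nor the paper verifies, but which does not affect the truth of the theorem as stated). With that caveat, and the shared reliance on an unverified symbolic computation for the final eigenvalue/positivity check, your plan is sound.
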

\begin{proof}
Let $\alpha=1$ and $\kappa=0$. With the help of  Mathematica, the eigenvalues of ${\rm Hess}(\tilde S)$ at $\tilde{S}=\tilde{S_{cfb}}$ can be found to be
\begin{eqnarray}
e_1&=&\frac{8x(1-x^2)(1+x^2)^2}{(1+2x-x^2)^4}, \nonumber\\
e_2&=&\frac{8x(1-x^2)(1+x^2)^2}{(1-6x^2+x^4)^2},\nonumber\\
e_3&=&\frac{8x(1+x^2)^2(-1+4x+x^2)}{(1+2x-x^2)^4},\nonumber\\
e_4&=&\frac{8x(1+x^2)^2(3-6x+2x^2+6x^3+3x^4)}{(1+2x-x^2)^3(1+2x+x^2)^2}.
\label{eq:eigenvalues}
\end{eqnarray}
As  $x\in(0, \sqrt{2}-1)$,  $e_1, e_2$ and $e_4$ have positive values, while $e_3>0$ when $-1+4x+x^2>0$, that is, $x>\sqrt{5}-2$. Therefore, for   $x\in(\sqrt{5}-2, \sqrt{2}-1)$, the Hessian matrix ${\rm Hess}(\tilde S_{cfb})$ is positive definite, which establishes that  $\tilde{S}_{cfb}$ is a local minimizer for these values of $x$.
\end{proof}

Table \ref{tb: transmission} and Table \ref{tb: amplification} illustrate the effect of transmission and amplification losses on the range $(x_{lm},~\sqrt{2}-1)$ of over which $\tilde{S}_{cfb}$ is a local minimizer. We see that as either transmission losses or amplification losses increase, the range of values of $x$ over which the dual-NOPA coherent feedback network is optimal become wider.
\begin{table}[htbp]
\centering
\caption{Influence of transmission losses on the range $(x_{lm},~\sqrt{2}-1)$ of over which $\tilde{S}_{cfb}$ is a local minimizer with $\kappa=0$ and $\alpha=10^{-0.005d}$}\label{tb: transmission}
\begin{tabular}{|c|c|c|}
\hline
$d$ & $x_{lm}$ \\
\hline
$0$ &  $0.236068$\\
\hline
$1$ &  $0.212692$\\
\hline
$5$ &  $0.134477$\\
\hline
\end{tabular}
\end{table}
\begin{table}[htbp]
\centering
\caption{Influence of amplification losses on the range $(x_{lm},~\sqrt{2}-1)$ of over which $\tilde{S}_{cfb}$ is a local minimizer with $d=1$ and $\alpha=10^{-0.005d}$}\label{tb: amplification}
\begin{tabular}{|c|c|c|}
\hline
$\kappa$ & $x_{lm}$ \\
\hline
$0$ & $0.212692$\\
\hline
$0.1\frac{3 \times 10^6}{\sqrt{2} \times 0.6}x$ &$0.211836$ \\
\hline
$0.2\frac{3 \times 10^6}{\sqrt{2} \times 0.6}x$ & $0.210989$\\
\hline
$0.5\frac{3 \times 10^6}{\sqrt{2} \times 0.6}x$ & $0.208503$\\
\hline
$\frac{3 \times 10^6}{\sqrt{2} \times 0.6}x$ &  $0.204528$\\
\hline
\end{tabular}
\end{table}


\section{Conclusion}
\label{sec:conclusion}
This paper has studied the optimization of EPR entanglement of a static linear quantum system that is composed of a static linear passive optical network in a certain coherent feedback configuration with two NOPAs in the infinite bandwidth limit. We reformulate the optimization of the EPR entanglement to the problem of finding a $2 \times 2$ complex unitary matrix at which a cost function $V(\tilde{S})$ is locally minimized, with respect of $\tilde{S}$.  By employing the modified steepest descent on Stiefel manifold method, we have found the unitary matrix $\tilde S_{cfb}$ corresponding to the coherent feedback system shown as Fig.~\ref{fig:dual-NOPA-cfb} as a critical point of $V(\tilde{S})$.  When losses are neglected, the coherent feedback system is a local  minimizer when $x\in (\sqrt{5}-2, \sqrt{2}-1)$. When transmission and amplification losses increase, the range of values of $x$ over which the coherent feedback system is a local minimizer of $V(\tilde s)$ is enlarged. 
In addition, one may wonder if there exists other local minimizers at which the system generates better EPR entanglement. Hence future work can consider further developing the static passive optical network to search for another local optimizer that may yield better EPR entanglement than the system studied in \cite{SN2015qip} as shown in Fig.~\ref{fig:dual-NOPA-cfb}.


\end{document}